\newtheorem{theorem}{Theorem}
\newtheorem{lemma}{Lemma}
\newtheorem{definition}{Definition}
\newcommand{\removelatexerror}{\let\@latex@error\@gobble}
\begin{document}
\title{\emph{TRAIL}: Cross-Shard Validation for \\ Cryptocurrency Byzantine Shard Protection\thanks{The paper is eligible for the best student paper award.}}

\author{
Mitch Jacovetty, Joseph Oglio, Mikhail Nesterenko, and Gokarna Sharma\\
\textit{Department of Computer Science, Kent State University,
Kent, OH 44242, USA} \\
mjacovet@kent.edu, joglio@kent.edu, mikhail@cs.kent.edu, sharma@cs.kent.edu}

\date{}

\sloppy
\maketitle
\thispagestyle{plain}
\pagestyle{plain}

\begin{abstract}
We present \emph{TRAIL}: an algorithm that uses a novel consensus procedure to tolerate failed or malicious shards within a blockchain-based cryptocurrency. Our algorithm takes a new approach of selecting validator shards for each transaction from those that previously held the assets being transferred. This approach ensures the algorithm's robustness and efficiency. \emph{TRAIL} is presented using \emph{PBFT} for internal shard transaction processing and a modified version of \emph{PBFT} for external cross-shard validation. We describe \emph{TRAIL}, prove it correct, analyze its message complexity, and evaluate its performance. We propose various \emph{TRAIL} optimizations: we describe how it can be adapted to other Byzantine-tolerant consensus algorithms, how a complete system may be built on the basis of it, and how \emph{TRAIL} can be applied to existing and future sharded blockchains.
\end{abstract}


\begin{IEEEkeywords}
Blockchain, Cryptocurrency, Sharding, Byzantine faults, Cross-shard validation, Distributed consensus.
\end{IEEEkeywords}


\section{Introduction}

In this paper, we present \emph{TRAIL} -- an algorithm for robust cryptocurrency blockchain design. A blockchain is a shared, immutable, append-only distributed ledger. 
A cryptocurrency blockchain~\cite{nakamoto,ethereum} is typically maintained by a peer-to-peer network. This design eliminates centralized control over transaction processing and makes the system potentially more scalable, flexible, and efficient.

To function as money, cryptocurrency needs to withstand network failures and malicious user behavior. It is usually designed to tolerate Byzantine faults~\cite{lamport1982byzantine}. A Byzantine peer may deviate from the algorithm and behave arbitrarily.
Therefore, such faults encompass a variety of failures and security threats. Despite the faults, correct peers need to be able to arrive at consensus on proposed transactions. 

Popular cryptocurrencies use proof-of-work based consensus algorithms~\cite{nakamoto}
in which peers compete for the right to publish records on the blockchain by searching for solutions to cryptographic challenges. Such algorithms tend to be conceptually simple and robust. However, they are resource intensive and environmentally harmful~\cite{wendl2023environmental}. Therefore, modern blockchain designs often focus on cooperative consensus algorithms.

In these cooperative consensus algorithms, 
rather than compete, peers exchange messages to arrive at a joint decision. Such algorithms may tolerate some number $f$ of faulty processes. This number is called tolerance threshold. It is usually a fraction of the network size $n$. 
One of the most widely used algorithms in this category is \emph{PBFT}~\cite{pbft}. 

Scaling up a Byzantine-robust algorithm is challenging as it usually involves system-wide broadcasts. Such broadcasts are expensive in large systems. A prominent approach of improving scalability in blockchains is sharding. In sharding, the network peers are divided into committees or shards. Each shard is made responsible for a subset of the processing done or the data stored by the network. Every shard internally runs a consensus algorithm, such as \emph{PBFT}, and coordinates with other shards to achieve global consistency. Thus, the overall workload is distributed and the processing of records is potentially accelerated. 

However, such sharding is at cross-purposes with fault tolerance: the network is only as reliable as any of its shards. For example, given a fixed number of peers, decreasing the shard size increases the number of available shards. This results in greater parallelism in transaction processing. Yet, a small shard is more vulnerable to failure since it has lower tolerance threshold $f$ of its internal consensus algorithm. 

The sharded blockchains presented in the literature usually assume that no shard tolerance threshold is breached. This places a limit on the efficiency of the sharding approach to performance improvement since shards need to be made large enough to ensure that they never fail. 

In this paper, we address the handling of complete shard failures which potentially allows aggressively small shards and removes the shard size scalability obstacle. A naive approach would be to group shards into static meta-shards. Such a meta-shard would treat individual shards as peers and run a meta-consensus algorithm among them to validate transactions across shards to withstand individual shard failures. However, concurrent transactions that are assigned to different shards would be verified by the same static meta-shard, regardless of the transactions' nature or history. This may create a performance bottleneck. 

\ \\
\textbf{Paper contribution.}  We propose \emph{TRAIL}: a 
novel application-specific approach to cross-shard validation. With this technique, a trail of shards dynamically follows each coin according its transaction history movement. The source shard runs an internal shard consensus algorithm to validate and linearize transactions. The trail of shards runs a cross-shard consensus algorithm to confirm the transaction and fortify it against shard failure. We present \emph{TRAIL} using \emph{PBFT} for both internal shard transaction processing and external cross-shard validation. We utilize \emph{PBFT} since it is well-known and widely used. 
Our solution may use various \emph{PBFT} efficiency enhancements such as parallel transaction processing and transaction pipelining.
Moreover, \emph{TRAIL} is independent of the specifics of sharding operation and may be adapted to enhance the robustness of consensus algorithms other than \emph{PBFT}.

We evaluate the performance of \emph{TRAIL} using an abstract simulator and study its transaction  confirmation rate, scalability and robustness against peer and shard failure. Our experiments indicate that \emph{TRAIL} adds shard failure protection with relatively modest resource expenditure. 

%
%

\ \\
\textbf{Paper organization.} 
We describe the network model and other preliminaries in Section~\ref{secPreliminaries}. We present \emph{TRAIL} in Section~\ref{secDescription}. We prove it correct and estimate its message complexity in Section~\ref{secCorrecness}. 
In Section~\ref{secExtensions}, we describe the algorithm's enhancements, such as splitting and merging coins and parallelizing transactions. We discuss how to improve \emph{TRAIL}'s efficiency by internally verifying transactions involving only a single shard. We also discuss practical implementation concerns such as recovering client information, bootstrapping the system, load balancing and shard maintenance.  In Section~\ref{secEvaluation}, we present performance evaluation of \emph{TRAIL} in an abstract simulator. We give an analysis of how \emph{TRAIL} improves network's meant time to failure. 
In Section~\ref{secRelated}, we describe related literature and discuss how various \emph{PBFT} enhancements and replacements 
can be used in \emph{TRAIL}, as well as how \emph{TRAIL} can be used to ensure shard failure tolerance in other sharding algorithms. We conclude the paper in Section~\ref{secEnd} by discussing potential further \emph{TRAIL} development directions.

\section{Network Model, Problem Statement, \emph{PBFT}}\label{secPreliminaries}
\noindent
\textbf{System model.}
We assume a peer-to-peer network. Each peer has a unique identifier. 
A peer may send a message to any other peer so long as it has the receiver's identifier.
Peers communicate through authenticated channels: the receiver of the message may always identify the sender. 
The communication channels are FIFO and reliable.

Network peers are grouped into \emph{shards}. Every shard has a unique identifier. For simplicity, we assume that all shards are the same size $s$. Each shard maintains a portion of the blockchain's data. Each shard peer stores a copy of its shard's data. Any peer may determine the shard identifier of any other peer in the network.

Peers are either correct or faulty. Faults are \emph{Byzantine}~\cite{lamport1982byzantine}: a faulty peer may behave arbitrarily. A \emph{peer tolerance threshold} $f$ is the maximum number of faulty peers that a shard can tolerate. A shard is correct if it has at most $f$ faulty peers.  The shard is faulty otherwise.

\ \\
\textbf{Data model and the problem.} 
A \emph{coin} is a unit of ownership whose transitions are 
recorded by the network.
Each coin has a unique identifier. A \emph{wallet} is a collection of coins. Each shard is responsible for storing and updating a disjoint subset of the network's wallets. A \emph{client} is an entity that owns a wallet. We assume a client is external to the peer-to-peer network but may communicate with any peer. Clients may submit transactions to the network requesting a coin to be moved from a \emph{source wallet} to a \emph{target wallet}. 
The peers are able to authenticate wallet owner; the peers accept transaction requests only from the source client. 
The approval of the target wallet owner is not required.

A cryptocurrency algorithm constructs a sequential ledger of transactions reflecting coin movements. Two transactions $t1$ and $t2$ are \emph{consequent} in this ledger if they operate on the same coin and there is no transaction $t3$ also operating on this coin such that $t3$ comes after $t1$ and before $t2$. 

An algorithm \emph{state} is an assignment of values to variables in all processes. 
Algorithm code contains a sequence of actions guarded by boolean guard predicates.
An action whose guard evaluates to \textbf{true} is \emph{enabled}. An algorithm \emph{computation} is a sequence of steps such that for each state $s_i$, the next state $s_{i+1}$ is obtained by executing an action enabled in $s_i$.

To make the \emph{TRAIL} correctness argument more rigorous, we formally state the problem that it solves. 

\begin{definition} An algorithm solves \emph{the Currency Transmission Problem} if it constructs a transaction ledger satisfying the following two properties:\\
\emph{ownership continuity} -- for any pair of consequent transactions $t1$ and $t2$, the target of $t1$ is the source of $t2$; \\
\emph{request satisfaction} -- if the owner requests a coin movement from its wallet, this request is eventually satisfied.
\end{definition}

Ownership continuity is a safety property that requires that a coin can only be moved out of a wallet once for each time it is moved into it. That is, it precludes double-spend attacks and disallows spending the money that the client does not have. The request satisfaction property guarantees liveness: the client request is eventually fulfilled.

\ \\
\textbf{\emph{PBFT}.} \emph{PBFT} is a Byzantine-robust consensus algorithm. Its tolerance threshold is 
$f = \lfloor (n-1)/3 \rfloor $,
where $n$ is the total number of peers in the system.

Peers communicate directly with each other via message broadcast. One of the peers is a \emph{leader}. 
The leader linearizes client requests. 
A period of single leader continuous operation is a \emph{view}. 
\emph{PBFT} is in \emph{normal operation} if the leader is correct. Normal operation has three phases: pre-prepare, prepare, and commit.

Once the leader receives a client transaction request, it assigns it a unique sequence number and starts the \emph{pre-prepare} phase by broadcasting a pre-prepare message containing the transaction and a sequence number to all peers. In the \emph{prepare} phase, each peer receives the pre-prepare message and broadcasts a prepare message containing the information that it received from the leader. If a peer receives $n-f-1$ prepare messages that match the initial pre-prepare, this peer is certain that correct peers agree on the same transaction. In this case, 
the peer starts the \emph{commit} phase by broadcasting a commit message.
Once a peer receives $n-f$ commit messages, normal \emph{PBFT} operation concludes and the peer informs the client that the transaction is committed. The transaction is confirmed once the client receives $f+1$ commits.

If the leader is faulty, the client requests may not be carried out. In this case, the client or the peers initiate a \emph{view change} to replace the leader. The view change process is designed to maintain transaction consistency through the transition: if a transaction is committed by a correct peer in the old view, the new leader submits it with the same sequence number so that the rest of the peers commit it in the new view.

If the number of faulty peers does not exceed the peer tolerance threshold $f$, \emph{PBFT} guarantees the following three properties: \emph{agreement} -- if a correct peer confirms a transaction, then every correct peer confirms this transaction; \emph{total order} -- if a correct peer confirms transaction $t1$ before transaction $t2$, then every correct peer confirms $t1$ before $t2$; and \emph{liveness} --  if a transaction is submitted to a sufficient number of correct peers, then it is confirmed by a correct peer. 
We assume these properties also apply to correct clients. 

The first two properties are satisfied regardless of the network synchrony. The liveness property is guaranteed only if the network is partially synchronous, meaning that the message transmission delay does not indefinitely grow without a bound.

\begin{figure}[htbp]
    \centering
    \includegraphics[width=\columnwidth]{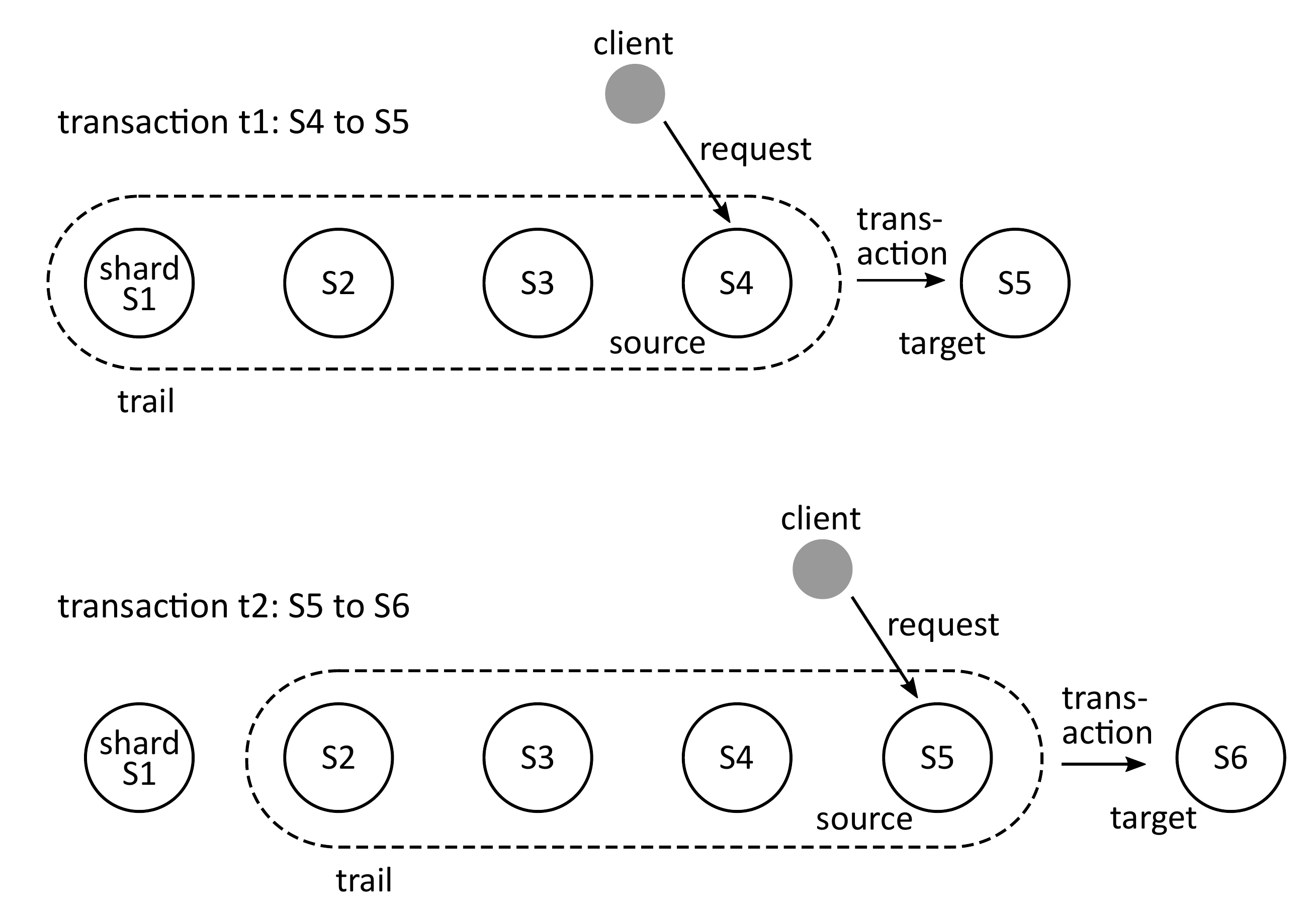}
    \caption{Trail membership modification under consequent transactions. The first transaction moves a coin from a wallet in shard $S4$ to a wallet in shard $S5$. The second moves the same coin from $S5$ to $S6$.}
    \label{figTrail}
\end{figure}

\begin{figure*}[htbp]
    \centering
    \includegraphics[height=5.1cm]{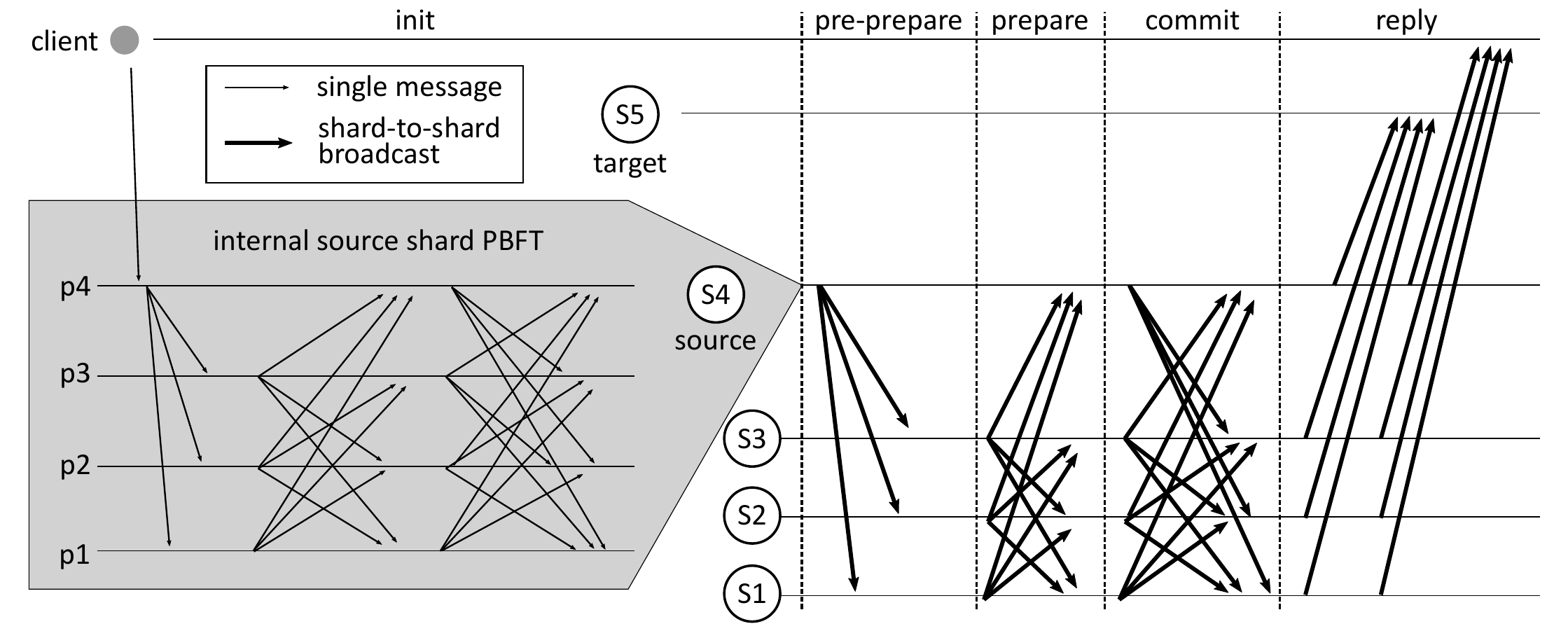}
    \caption{Message transmission in \emph{TRAIL}'s normal operation. The coin trail contains shards: $S1, S2, S3$ and $S4$. The coin is located in a wallet stored by shard $S4$. A client sends a transaction requesting to move the coin from the wallet of this source shard $S4$ to a wallet of  shard $S5$. First, the source shard runs internal \emph{PBFT}; then, it runs the phases of external shard \emph{PBFT}. After committing, the trail shards notify the client and the target shard.}
    \label{figTransmission}
\end{figure*}

\section{\emph{TRAIL} Description}\label{secDescription}

\noindent
\textbf{Algorithm outline.} The objective of the algorithm is to ensure the validity of coin transitions between wallets despite faulty peers and shards.

To counter malicious behavior of faulty shards, \emph{TRAIL} requires a collection of shards to agree on coin movement. This collection is called a \emph{trail}. A trail is composed of the $t$ unique shards whose wallets the coin visited most recently. Refer to Figure~\ref{figTrail} for an illustration. Notice that this trail is specific to a coin and changes as the coin moves from wallet to wallet. At any point in the computation, each coin may have its own separate trail of shards. We assume that each client knows the identities of the coins in its wallet as well as their trails.

\emph{Shard tolerance threshold} $F$ is the maximum number of faulty shards that \emph{TRAIL} may tolerate. The length of the trail, $t \geq 3F+1$.  
Note that $F$ and the peer tolerance threshold $f$ are not related. 

\emph{TRAIL} consists of two parts: (1) \emph{internal} source shard \emph{PBFT} and (2) \emph{external} trail \emph{PBFT}. To initiate the movement of a coin, the client that owns the source wallet sends a transaction request to the shard that holds it. To linearize received transaction requests and ensure that each individual transaction has an agreed-upon sequence number, the source shard peers execute internal \emph{PBFT}, see Figure~\ref{figTrail} for illustration, in which shard $S4$ is the source shard.

Once the source shard peers agree on this transaction, they initiate a modified external trail \emph{PBFT}. For that, each source shard peer broadcasts a \emph{pre-prepare} message to every peer of every trail shard. Once a trail peer receives $s-f$ 
such \emph{pre-prepare}s from the source shard, 
it initiates the next \emph{PBFT} phase by sending \emph{prepare} messages to every peer in every trail shard. In this way, each shard-to-shard broadcast emulates an individual message transmission in classic \emph{PBFT}. This continues until the external \emph{PBFT} instance commits. After that, each trail shard peer records the transaction in its ledger and notifies the target shard and the client. 

Once the target shard and the client are notified by $t-F$ trail shards, 
they record the transaction in their ledgers. The target shard, which is shard $S5$ in Figure~\ref{figTrail}, becomes the source shard for the next transaction. 

If the leader of the source shard, $S4$, is faulty, the other peers of the source shard execute a view change, switch to a new leader, and continue with internal \emph{PBFT}. 

If the source shard as whole is not faulty, i.e. the number of faulty peers in the source shard is below the tolerance threshold $f$, then the faulty peers may not influence trail shard. Indeed, for each external \emph{PBFT} message, the each peer of the trail shard expects at least $s-f$ individual messages.

If the number of peers in the source shard exceeds the tolerance threshold $F$ then the whole shard is faulty. In this case, the individual messages of the faulty source shard peers are equivalent to the faulty messages of the source shard. The external \emph{PBFT} guarantees that, despite the faulty shard, no spurious transactions will be recorded by the trail shards and that eventually the faulty leader shard is replaced. 

Specifically, the trail shards execute a view change, switch to a new shard as a leader, and continue with the consensus process, including a new internal \emph{PBFT} instance being performed within the new leader shard. Note that in the latter case, the record of the transaction may be placed in the trail shards but not in the faulty source shard that is nominally responsible for maintaining the source wallet record. This is an essential feature of our algorithm: the faulty source shard that stores the client wallet may be bypassed. 

Let us now present the algorithm in detail.

\ \\
\textbf{\emph{TRAIL} constants, variables, and functions.}  These constructs are shown in Algorithm~\ref{algVars}. 
Each peer with id $p$ knows the following constants: $f$ -- peer tolerance threshold (the maximum number of faulty peers in a correct shard); $F$ -- shard tolerance threshold (the maximum number of faulty shards); $s$ -- shard size; and $t$ -- trail size.

Several variables are common across transactions. We list them in a single place for convenience. Each transaction uses a coin identifier $coin$; a source wallet id $sWallet$; a target wallet is $tWallet$; a transaction sequence number $seq$ assigned by the source shard; and the sequence of shard ids $trail$ that indicates the trail shards for this coin at its present location.

Each peer maintains a $ledger$, which is a sequence of transaction records that it confirmed in a trail or received as a target. 

\emph{TRAIL} functions are shown in Algorithm~\ref{algFuncs}. They are grouped by their purpose.
Ledger maintenance functions are in Lines~\ref{lineLedgerStart}--\ref{lineLegerEnd}.
\emph{TRAIL} has two such functions. 

Function \textsc{Record} appends the transaction record to the ledger.
Function \textsc{IsPresent}($coin, wallet$) returns \textbf{true} if the ledger's most recent transaction record about $coin$ moved it to $wallet$, i.e. there is a transaction where $wallet$ is the target wallet and this record is not followed by a transaction moving $coin$ from $wallet$ to a different target wallet. 

\emph{TRAIL} uses several functions for wallet lookup and communications. They are shown in Lines~\ref{lineCommunicationsBegin}--\ref{lineCommunicationEnd}. 
Function~\textsc{GetShard}($wallet$) returns the id of the shard that stores $wallet$. We assume that every peer is able to identify which shard maintains each $wallet$.
%
%
%
Functions \textsc{Send} and \textsc{Receive} are single-message transmissions to the specified sender and receiver with straightforward functionality. In function \textsc{SendToShard}($shard, message$), the sender peer broadcasts a $message$ to all peers in $shard$. Function \textsc{ReceiveFromShard}($shard, message$) returns \textbf{true} once the peer receives $message$ from $t-F$ unique peers of $shard$.  

The internal source shard \emph{PBFT} is represented by two functions in \emph{TRAIL}. They are shown in Lines~\ref{linePBFTbegin}--\ref{linePBFTend}. Function \textsc{StartShardPBFT} initiates the \emph{PBFT} operation. The last function \textsc{CompleteShardPBFT} signifies that the internal \emph{PBFT} is completed and the peers assigned sequence number $seq$ to the transaction.

\begin{figure}[htbp]
\removelatexerror
\input{alg1_variables}
\end{figure}

\ \\
\textbf{\emph{TRAIL} phases.}  The actions for the algorithm are presented in Algorithm~\ref{algActions}.
We only show normal operation code for \emph{TRAIL}. View change code is added accordingly. Client and target code is not shown. See Figure~\ref{figTransmission} for the illustration of algorithm operation.

\emph{TRAIL} phases execute the internal source \emph{PBFT} and the external trail \emph{PBFT}.
\textbf{Phase 0: Init} (see Lines~\ref{lineInitStart}--\ref{lineInitEnd}) starts when a peer receives a transaction request from a client. If the peer contains the source wallet, i.e. it is the source shard for the transaction, the peer initiates internal shard \emph{PBFT}. After the source shard runs classic \emph{PBFT}, if the shard is not faulty, all the source shard peers agree on the transaction and its sequence number. The completion of internal \emph{PBFT} starts \textbf{Phase 1: Pre-prepare} (Lines~\ref{linePrePrepareStart} through~\ref{linePrePrepareEnd}). Each source shard peer sends a \emph{pre-prepare} message to a peer of every trail shard.

The receipt of $s-f$ messages from the source shard starts \textbf{Phase~2: Prepare} (Lines~\ref{linePepareStart}--\ref{linePrepareEnd}) in all the trail shards. Once a peer of the trail shard ascertains that the coin is present in the source wallet, i.e. the transaction is valid, the peer sends a $prepare$ message to all of the trail shards.

In \textbf{Phase 3: Commit} (Lines~\ref{lineCommitStart}--\ref{lineCommitEnd}), each trail peer assembles the \emph{prepare} messages. Variable $prepShards.coin.seq$ collects the identifiers of the shards from which this peer has received $s-f$ \emph{prepare} messages. If the number of these identifiers is $t-F$, the peer sends $commit$ message to all trail shards, signifying that it is ready to commit.

\textbf{Phase 4: Reply} (Lines~\ref{lineReplyStart}--\ref{lineReplyEnd}) is similar to the \textbf{Commit} phase. Once enough \emph{commit} messages from trail shards arrive, the peer records the committed transaction to its ledger and notifies the peers of the target shard and the client.

\begin{figure*}[htbp]
\removelatexerror
\hspace{-3mm}
\begin{tabular}{c@{\hspace{3mm}}c}
\begin{minipage}[t]{0.5\textwidth}
   \input{alg2_basic_functions}
   \vspace{-4mm}
\end{minipage}
&
\begin{minipage}[t]{0.5\textwidth}
   \input{alg3_actions}
   \vspace{-4mm}
\end{minipage}
\end{tabular}
\vspace{10mm}
\clearpage
\end{figure*}

\section{\emph{TRAIL} Correctness and Efficiency}
\label{secCorrecness}

\noindent
\textbf{Correctness proof.} Consider \emph{SimpleTRAIL}: a simplification of \emph{TRAIL} that operates as follows. For every shard in \emph{TRAIL}, there is only a single peer in \emph{SimpleTRAIL}. For simplicity, we still refer to these individual peers as shards. Once a client submits a request in \emph{SimpleTRAIL} to the source shard, that single peer immediately assigns it a sequence number and runs classic \emph{PBFT} with the trail shard peers, acting as the leader.

\begin{lemma}\label{lemSimpleTRAIL}
\emph{SimpleTRAIL} solves the Currency Transmission Problem with at most $F$ Byzantine shards.
\end{lemma}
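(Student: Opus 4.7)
The plan is to reduce \emph{SimpleTRAIL} to classic \emph{PBFT} with $n=t$ peers and then lift \emph{PBFT}'s guarantees to the two properties of the Currency Transmission Problem. Because each trail holds exactly $t \geq 3F+1$ single-peer shards with at most $F$ Byzantine, the \emph{PBFT} instance that the source shard leads over the trail has $n=t$ and tolerance $f=F$, meeting the $n \geq 3f+1$ threshold. The agreement, total order, and liveness properties stated in Section~\ref{secPreliminaries} therefore apply directly, treating each shard as a single \emph{PBFT} peer.

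For ownership continuity I would proceed by induction on the sequence of consequent transactions on a fixed coin. In the base case the coin sits at its initial wallet and bootstrapping makes \textsc{IsPresent} succeed exactly there. For the inductive step, let $t_1$ and $t_2$ be consequent. The update rule in Phase~4 either leaves the trail unchanged or prepends $\textsc{GetShard}(tWallet)$ and drops the last shard, so the trails of $t_1$ and $t_2$ share at least $t-1 \geq 3F$ shards. At most $F$ of these are Byzantine, leaving at least $2F$ correct shards straddling both trails; by \emph{PBFT} agreement on $t_1$ every one of them has \textsc{Record}ed $t_1$. When such a correct shard enters Phase~2 for a candidate $t_2$ whose claimed source wallet differs from the target of $t_1$, its \textsc{IsPresent} check fails and it does not broadcast a prepare. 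Advancing past Phase~3 requires $t-F-1 \geq 2F$ prepares, but only the at most $F$ Byzantine shards would send any, so no correct shard commits a spurious $t_2$, and the only $t_2$ that can commit is one whose source wallet equals the target of $t_1$.

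For request satisfaction I would invoke \emph{PBFT} liveness under partial synchrony. If the source shard is correct it acts as a correct leader of a $t$-peer \emph{PBFT} instance with at most $F$ faulty peers, and the client request commits in normal operation. If the source shard itself is Byzantine, the external view change sketched in Section~\ref{secDescription} elevates another trail shard to leader; since at most $F$ out of $t \geq 3F+1$ shards are faulty, view changes eventually settle on a correct leader that drives the request to commit and informs the client.

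The main obstacle I anticipate is the inductive safety argument, specifically giving a clean accounting that the at least $2F$ correct shards shared by consecutive trails (i) have actually \textsc{Record}ed $t_1$ rather than merely prepared it, and (ii) suffice to veto any spurious $t_2$ through the prepare-threshold arithmetic even when the source shard of $t_2$ is itself Byzantine. The remainder is bookkeeping over how \textsc{GetTrail} and \textsc{Record} evolve across transactions and an appeal to the quoted \emph{PBFT} properties for each individual instance.
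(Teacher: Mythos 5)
Your proposal reaches the right conclusion but by a genuinely different route than the paper. The paper's proof is a case analysis on \emph{who} submits the transaction consequent to $t_1$: if it is the owner of the source wallet of $t_1$ (a double-spend attempt), the paper appeals directly to \emph{PBFT}'s total-order property to conclude that this transaction is sequenced after $t_1$ and therefore rejected as moving an already-spent coin; if it is the owner of the target wallet (the legitimate $t_2$), the paper argues about the single shard $px$ newly added to the trail, splitting on whether $px$ is correct (agreement and total order, extended to clients, make it confirm $t_1$ and then $t_2$) or faulty (the remaining correct trail shards, all inherited from $trail(t_1)$, run a view change and still confirm $t_2$ after $t_1$). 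You instead run an induction over the coin's history and replace the appeal to total order with a quorum-counting argument: the consecutive trails overlap in at least $t-1-F\geq 2F$ correct shards that have recorded $t_1$, a spurious $t_2$ cannot gather the $t-F-1$ matching prepares from the at most $F$ Byzantine shards, hence no correct shard commits it. Your version is more mechanical and self-contained---it derives safety from the message thresholds rather than citing \emph{PBFT}'s abstract ordering guarantee---at the cost of more bookkeeping; the paper's version is shorter but leans entirely on the assumed \emph{PBFT} properties (and on the assumption, stated in Section~\ref{secPreliminaries}, that these properties extend to clients, which is what makes the newly added shard $px$ inherit the record of $t_1$).

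One caution: the obstacle you flag as item (i) is real and is exactly the point where the paper's proof does work that your counting argument does not. A correct shard shared by both trails might not yet have recorded $t_1$ when a competing request arrives, so \textsc{IsPresent} could succeed on the stale wallet and the shard could prepare a double-spend. The paper closes this by invoking total order within the single per-coin \emph{PBFT} instance: the conflicting request receives a later sequence number and is only evaluated after $t_1$ is committed. To make your induction go through you need the same per-coin sequencing claim (correct shards process requests for a given coin in sequence-number order), after which your threshold arithmetic is sound; without it, the counting argument alone does not exclude the race. Your treatment of request satisfaction via \emph{PBFT} liveness and external view change matches the paper's.
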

\begin{proof} We show that \emph{SimpleTRAIL} satisfies the two properties of the Currency Transmission Problem: ownership continuity and request satisfaction.

Let us discuss ownership continuity first. It requires that for any two consequent transactions $t1$ and $t2$, the target of $t1$ is the source of $t2$.  

Let us consider transaction $t1$ and transaction $t1'$ that is consequent with it. Let $trail(t1)$ be the sequence of processes that confirm $t1$. 
Consequent transactions operate on the same coin. Since peers can authenticate wallet owners, the peers reject a transaction that moves an unavailable coin. Hence, the $t1'$ must be by either by the owner of the source or the target wallet of $t1$.

Let $t1'$ be submitted by the owner of the source wallet of $t1$. In this case, it has to be confirmed by the peers of $trail(t1)$. These peers run \emph{PBFT}. The agreement property of \emph{PBFT} ensures that the trail confirms the same transactions. The total order property of \emph{PBFT} states that two transactions $t1$ and $t1'$ are confirmed in the same order. That is, $t1'$ must be confirmed after $t1$. However, $t1'$ is supposedly consequent with $t1$. That is, it tries to move the coin that is already spent. Hence, if $t1'$ is submitted by the owner of the source wallet of $t1$, it is rejected. Let $t1'$ be submitted by the target of $t1$. That is $t1' = t2$. In this case, the trail, $trail(t2)$, for transaction $t2$ is updated. Specifically, one peer $px$ is added to $trail(t1)$ and peer $py$ is removed. 

The newly added peer $px$ may be correct or Byzantine. In case $px$ is correct, by the design of \emph{SimpleTRAIL} it receives  confirmation messages similar to the client that submitted $t1$. We assume that the properties of \emph{PBFT} apply to clients as well as to the peers. That is, $px$ is subject to the agreement property of \emph{PBFT}. This means that it confirms $t1$. Also, $px$ is subject to the total order property of \emph{PBFT}, which means that $px$ confirms $t2$ after $t1$. That is, in this case, \emph{SimpleTRAIL} records two consequent transactions $t1$ and $t2$ such that the target of $t1$ is the source of $t2$. In other words, the ownership continuity property is satisfied. 

Let us consider the case of faulty $px$. The total number of faulty shards in $tail(t2)$ does not exceed the fault tolerance threshold $F$. The remaining correct peers of $tail(t2)$ all belong to $tail(t1)$. These peers and the client execute a view-change procedure of \emph{PBFT}, replace the leader and confirm $t2$. Due to the total order property of \emph{PBFT}, $t2$ is confirmed after $t1$ which satisfies the continuity property of the Currency Transmission Problem. 

Similarly, the liveness property of \emph{PBFT} guarantees that the transactions moving the coin are eventually confirmed. Thus, the request satisfaction property of the Currency Transmission Problem is also satisfied. Hence the lemma. 
\end{proof}

Given a computation of \emph{TRAIL}, let us define an \emph{equivalent} computation of \emph{SimpleTRAIL}. If every correct peer in a shard executes a phase, for example Phase 0, in \emph{TRAIL}, then the single peer that corresponds to the shard in \emph{SimpleTRAIL} executes this action. If every correct peer in a shard sends a message to every peer in a shard, then the corresponding peer in \emph{SimpleTRAIL} sends this message to the single target peer. If a correct peer receives $s-f$ messages from the peers in a single shard in \emph{TRAIL}, then a single correct peer receives a message in \emph{SimpleTRAIL}. The transaction request that is broadcast from a client to peers of the same shard in \emph{TRAIL} is equivalent to a single message to \emph{SimpleTRAIL}. $s-f$ replies from the peers in the same shard to the client in \emph{TRAIL} are a single reply in \emph{SimpleTRAIL}.

\begin{lemma}\label{lemEquivTRAIL}
For every computation of \emph{TRAIL}, there is an equivalent computation of \emph{SimpleTRAIL}.
\end{lemma}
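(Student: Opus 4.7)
The plan is to establish the equivalence by induction on the length of the TRAIL computation, constructing the SimpleTRAIL computation step by step using the mapping described immediately before the lemma. For each step of the TRAIL computation I will either produce a matching SimpleTRAIL step, or argue that the TRAIL step is an internal event (for instance, a message sent by a faulty peer inside a correct shard that cannot by itself cross an $s-f$ receiving threshold) that has no observable effect and therefore requires no corresponding SimpleTRAIL step.

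First I would fix the shard-to-peer mapping: each TRAIL shard $S$ is represented by a single SimpleTRAIL peer $\hat{S}$. A correct TRAIL shard (at most $f$ Byzantine peers) maps to a correct SimpleTRAIL peer; a faulty shard (more than $f$ Byzantine peers) maps to a Byzantine SimpleTRAIL peer. Initial ledgers of the correct peers within a shard are taken to coincide and become the initial ledger of $\hat{S}$. Then I would walk phase by phase through Algorithm~\ref{algActions}. For \textbf{Phase 0}, the internal PBFT of a correct source shard, by the agreement and total order properties of PBFT, yields the same tuple $\langle coin, sWallet, tWallet, seq\rangle$ at every correct peer, which I collapse to $\hat{S}$ assigning that single sequence number. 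For \textbf{Phase~1}, the resulting identical pre-prepare broadcasts by the correct source-shard peers collapse into one pre-prepare sent by $\hat{S}$ to every $\hat{S}' \in trail$. Phases~2--4 are handled analogously: a TRAIL transition guarded by $\textsc{ReceiveFromShard}$ collecting $s-f$ messages corresponds in SimpleTRAIL to a single \textsc{Receive} from $\hat{S}$; symmetrically, each broadcast of Phase 2, 3, or 4 by every correct peer of a trail shard maps to one \textsc{Send} by $\hat{S}$. Client request and reply traffic is folded in the same way, since the lemma's preamble already identifies the $s-f$ replies in TRAIL with a single SimpleTRAIL reply. View-change events extend the mapping in the obvious way: an internal shard view change induces no SimpleTRAIL step, while a trail-level view change maps to the corresponding PBFT view change in SimpleTRAIL.

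The main obstacle will be showing that the emulation remains faithful in the presence of faulty peers inside otherwise correct shards. A Byzantine peer in a correct sending shard may fabricate a pre-prepare, prepare, or commit, and a Byzantine peer in a correct receiving shard may silently accept a bogus message. I would carry through the induction the invariant that, whenever a correct peer in a correct shard accepts a message as coming from a shard $S'$, at least $s-f$ peers of $S'$ have sent that message, of which at least $s-2f\ge f+1$ are correct. Because the phase immediately preceding the send required the same $s-f$ threshold (or, for Phase~1, because internal PBFT forces the correct peers of $S'$ to agree), the message can be attributed to $S'$ as a whole. Once this invariant is established, every TRAIL action of a correct peer corresponds to exactly one SimpleTRAIL action of the associated $\hat{S}$, faulty-shard actions are absorbed by the arbitrary behavior permitted to Byzantine SimpleTRAIL peers, and the constructed SimpleTRAIL execution is a valid computation equivalent to the given TRAIL computation in the sense required by the lemma.
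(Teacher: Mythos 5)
Your construction is essentially the paper's own argument: the paper also proves this lemma by taking an arbitrary \emph{TRAIL} computation and collapsing each shard-level event (internal \emph{PBFT} completion, shard-to-shard broadcasts, $s-f$-message receptions, client request and reply traffic, and both internal and external view changes) into the corresponding single-peer event of \emph{SimpleTRAIL}. Your version is somewhat more rigorous than the paper's outline --- in particular the explicit induction and the $s-2f \ge f+1$ quorum-counting invariant attributing accepted messages to the sending shard are left implicit there --- but the route is the same.
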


\begin{proof}(Outline) We prove the lemma by considering an arbitrary computation of \emph{TRAIL} and constructing an equivalent computation of \emph{SimpleTRAIL}. Let us consider a single client request. 
If the client sends a transaction request to the source shard peers and the request is received, the source shard executes the internal \emph{PBFT}. If the shard leader is faulty, the shard peers execute a view change. If not, they proceed straight to agreeing on the transaction sequence number. 

In either case, the correct peers execute \textsc{CompleteShardPBFT} which is equivalent to the start of pre-prepare phase in \emph{SimpleTRAIL}. The correct peers in the source shard then broadcast a pre-prepare message to the peers in every trail shard. That is, at least $s-f$ of these messages are received by every peer of the trail shard. This is equivalent to broadcasting a pre-prepare message in \emph{SimpleTRAIL}. The rest of the \emph{TRAIL} computation proceeds according to \emph{PBFT} and construction of the \emph{SimpleTRAIL} computation is similar. 

Let us now consider the case of faulty source shard. In this case, trail shards execute an external \emph{PBFT} view change. This operation maps directly to the equivalent \emph{SimpleTRAIL} computation. 
This construction of the equivalent computation for all client requests proceeds similarly. Hence the lemma.
\end{proof}

Lemma~\ref{lemSimpleTRAIL} shows that the non-sharded \emph{SimpleTRAIL} algorithm solves the Currency Transmission Problem. Lemma~\ref{lemEquivTRAIL} proves that every computation of the sharded algorithm \emph{TRAIL} is equivalent to a computation of \emph{SimpleTRAIL}. That is, every computation of \emph{TRAIL} satisfies the two properties of the problem. That is, \emph{TRAIL} also solves this problem. Hence the below theorem. 

\begin{theorem} Algorithm \emph{TRAIL} solves the Currency Transmission Problem with at most $F$ Byzantine shards and at most $f$ individual Byzantine faults in each correct shard. 
\end{theorem}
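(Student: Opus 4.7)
The plan is to deduce the theorem directly by composing the two preceding lemmas. Given an arbitrary computation $C$ of \emph{TRAIL} in which at most $F$ shards are Byzantine and each correct shard has at most $f$ faulty peers, I would invoke Lemma~\ref{lemEquivTRAIL} to obtain an equivalent computation $C'$ of \emph{SimpleTRAIL} in which at most $F$ of the single-peer ``shards'' are Byzantine. Lemma~\ref{lemSimpleTRAIL} then gives that $C'$ satisfies ownership continuity and request satisfaction. The remaining work is to lift these two properties from $C'$ back to $C$, which is what makes the theorem rather than a mere corollary.

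The central step is to make precise what it means for the distributed \emph{TRAIL} ledger to ``record'' a transaction. I would define the ledger of $C$ as the set of records that appear on the ledgers of enough correct trail peers for the corresponding record to exist in $C'$ under the equivalence of Lemma~\ref{lemEquivTRAIL}: namely, a transaction $\langle coin, sWallet, tWallet, seq, trail\rangle$ is considered recorded in $C$ exactly when at least $s-f$ correct peers in at least $t-F$ trail shards have appended it via the \textsc{Record} call in Phase~4. The equivalence construction in Lemma~\ref{lemEquivTRAIL} guarantees that this condition coincides with the single \textsc{Record} action executed by the corresponding \emph{SimpleTRAIL} peer in $C'$. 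Hence the pair of consequent transactions $t_1, t_2$ in $C$ and in $C'$ is the same pair, and their $(sWallet, tWallet)$ fields are identical, so ownership continuity in $C'$ immediately yields ownership continuity in $C$.

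For request satisfaction the argument is analogous but uses liveness. If a correct client submits a coin-movement request in $C$, that request reaches at least $s-f$ correct peers of the source shard, which under the equivalence corresponds to the single source peer receiving the request in $C'$. By Lemma~\ref{lemSimpleTRAIL} the request in $C'$ is eventually satisfied, and unwinding the equivalence shows that enough correct trail peers in $C$ execute Phase~4 and reply to the client. Combining these two observations establishes both properties for \emph{TRAIL} and completes the theorem.

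The part I expect to be most delicate is not the reduction itself but the precise statement of the equivalence: Lemma~\ref{lemEquivTRAIL} is proved only in outline, and one must check that a computation $C$ with a Byzantine source shard still maps to a \emph{SimpleTRAIL} computation in which the corresponding single peer is Byzantine and in which the view-change machinery of the outer \emph{PBFT} is faithfully mirrored. Once this mapping is verified, both properties transfer mechanically, and the theorem follows.
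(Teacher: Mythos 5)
Your proposal is correct and follows essentially the same route as the paper: the paper likewise derives the theorem by composing Lemma~\ref{lemEquivTRAIL} (every \emph{TRAIL} computation has an equivalent \emph{SimpleTRAIL} computation) with Lemma~\ref{lemSimpleTRAIL} (\emph{SimpleTRAIL} solves the Currency Transmission Problem with at most $F$ Byzantine shards). Your additional care in defining when a transaction is ``recorded'' in the sharded ledger and in flagging the outline-only status of the equivalence lemma goes slightly beyond the paper's own one-paragraph argument, but the underlying decomposition is identical.
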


\ \\
\textbf{Message complexity.} If $n$ is the number of participants, the number of messages \emph{PBFT} requires to satisfy a request during normal operation is in $O(n^2)$. Therefore, the internal \emph{PBFT} of the source shard takes  $O(s^2)$ messages, 
where $s$ is the shard size.
Since it takes $s^2$ messages to transmit a single message of the emulated external \emph{PBFT}, the complete message complexity of \emph{TRAIL} is in $O(s^2 + s^2t^2)$ where $t$ is the trail size. If $s=3f+1$ and $t=3F+1$, then the complexity is in $O(f^2 + f^2 F^2)$. 

If view changes are considered, \emph{PBFT} requires $O(n^4)$ messages in the worst case. Hence, the worst case message complexity for \emph{TRAIL} is in $O(f^4 + f^4 F^4)$. 


\section{\emph{TRAIL} Algorithmic Extensions and Implementation Considerations}
\label{secExtensions}
\noindent
\textbf{Parallelizing transactions, splitting, merging and mining coins.} The same source node may run multiple external or internal transactions so long as they concern different coins. 

Multiple coins may be merged and a coin may be split: this is analogous to creating $1$ dollar out of $100$ cents or vice versa.  Both operations may be convenient to simplify transactions or make them more efficient. If a coin is split, all its portions inherit the old coins' trail. 
Coin merging is a bit more involved since the merging coins, even if they are located in the same wallet, may have different trails. 
To merge, the two coins are marked as merging and their movement transactions are executed jointly. The coins are finally merged once they travel together for the length of the trail. At this point, they share all shards of the trail. 

To create, or mine, a new coin, it needs to acquire a trail of length $t$. This may be accomplished by forming a committee of arbitrary $t$ shards and running a \emph{PBFT} on this committee to agree on the new coin's trail. 

\ \\
\textbf{Optimizing internal transaction validation.} To decrease message overhead, transactions are divided into internal and external. In an \emph{internal transaction} the source and target wallet are maintained by the same shard. To confirm this transaction, the source shard does not consult the trail shards; it runs internal \emph{PBFT}, and thus relies on the internal shard fault tolerance to maintain wallet integrity. External transactions are processed as usual: with external \emph{PBFT}.

The trade-off for this optimization is decreased shard fault tolerance: the trail shards are not aware of the source shard internal transactions. However, shard failure may be determined by a failure detector \cite{failureDetectors,weakestDetector,omegaImplementation,doudou1998muteness,kihlstrom2003byzantine,baldoni2003consensus}. Such a detector establishes a shard failure and notifies other shards. In the event of a detected shard failure, the trail shards perform a \emph{failed shard recovery procedure} to restore the integrity of the system: the wallets maintained by the failed shard are moved to other shards and their contents are restored to the last known external transaction. 
The clients have to re-submit internal transactions. 

\ \\
\textbf{Wallet location, client data recovery, shard maintenance.}
While coins move between shards, the wallets are assumed to be stationary. For quick shard lookup by the client, the wallet id might contain the shard number. Alternatively, the wallet-to-shard mapping may be recorded in the same or in a separate ledger. For efficiency, wallets frequently participating in joint transactions may be moved to the same shard.

If a client loses its local information about the coin contents of its wallet, it may be able to recover it by conducting a network-wide query.  Note that asking the shard that keeps the wallet information alone is not sufficient: the shard may be faulty. Instead, the complete network broadcast is required. The trail shards that confirmed moving the coin should answer to the recovering client. Again, since some shards may be faulty, the client considers the coin present in its wallet if the whole trail confirms its location. 

In \emph{TRAIL} description, we assumed that the shard sizes are uniform. However, this does not have to be the case. Instead, the shards may grow and shrink as peers join or leave system. Shard sizes may also be adjusted in response to transaction load requirements. Shard membership may be maintained in the shard ledger or, alternatively, in a separate membership ledger.  

\ \\
\textbf{Algorithm parameter selection.} \emph{TRAIL} operates correctly regardless of the concrete values of shard size $s$ and trail size $t$. These parameters, however, affect the algorithm performance. Larger $s$ makes it less likely that the complete shard fails. Yet, larger $s$ makes the internal consensus algorithm less efficient. The smaller $s$ necessitates larger trail size $t$ to protect against shard failure. 

\section{Performance Evaluation and \\ Security Analysis}
\label{secEvaluation}
\noindent\textbf{Simulation setup.}
We evaluate the performance of \emph{TRAIL} in an abstract algorithm simulator QUANTAS~\cite{quantas}. QUANTAS simulates multi-process computation, message transmission and has extensive experimental setup capabilities. The simulator is implemented in C++.  It is optimized for multi-threaded large scale simulations~\cite{Shoshany2021_ThreadPool}.
The code for our \emph{TRAIL} implementation in QUANTAS as well as our performance evaluation data is available online~\cite{TrailGitHub,TrailData}.

The simulated network consists of individual peers. Each pair of peers is connected by a message-passing channel. Channels are FIFO and reliable.  A computation is modeled as a sequence of rounds. In each round, a peer receives messages that were sent to it, performs local computation, and sends messages to other peers.

Peers are divided into shards. Shard leaders propose transactions; clients are not explicitly simulated. A transaction has a $25\%$ probability of having source and target wallets in separate shards. Internal transactions are not externally verified by the trail of shards. If a shard is Byzantine, it generates invalid cross-shard transactions only. Specifically, the shard creates transactions moving coins that it has already spent.

\begin{figure}
\includegraphics[width=\columnwidth]{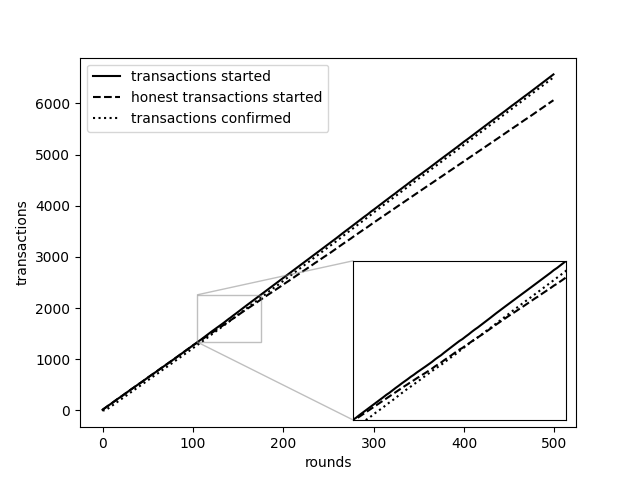}
\caption{Transactions approved over time without \emph{TRAIL} shard validation. The network approves both honest and malicious transactions.}

\label{novalidation-txs}
\end{figure}

\begin{figure}[htbp]
\includegraphics[width=\columnwidth]{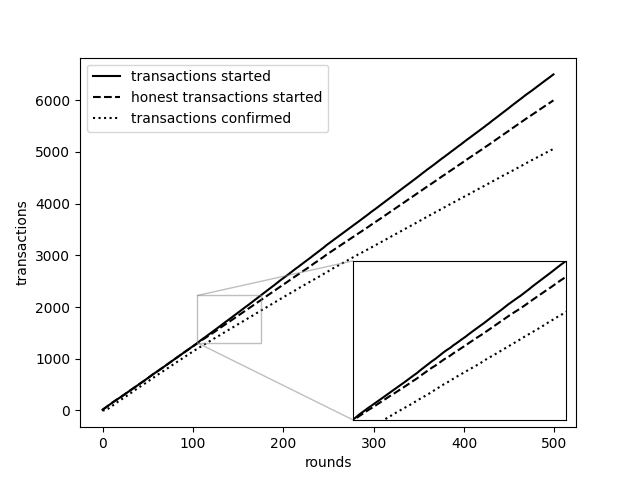}
\caption{Transactions approved over time with \emph{TRAIL}  shard validation.  The network approves honest transactions only.}

\label{normal-txs}
\end{figure}

\begin{figure}[htbp]
\includegraphics[width=\columnwidth]{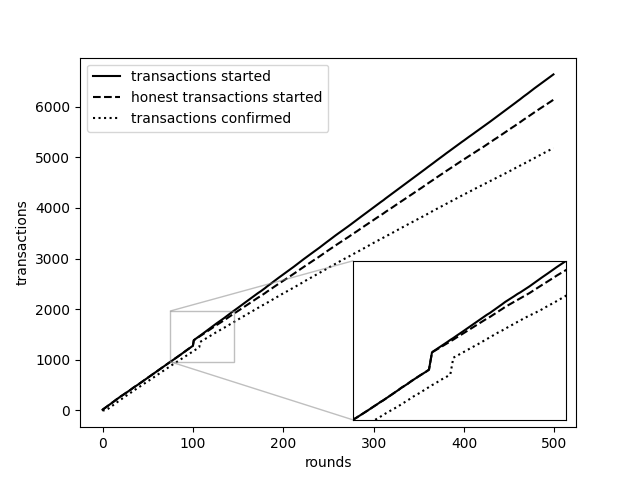}
\caption{Transactions approved over time in \emph{TRAIL} with shard validation and wallet recovery from the failed shards.  Correct shards detect the failure and submit additional transactions moving coins from the failed shards. }

\label{rollback-txs}
\end{figure}

\begin{figure}
\includegraphics[width=\columnwidth]{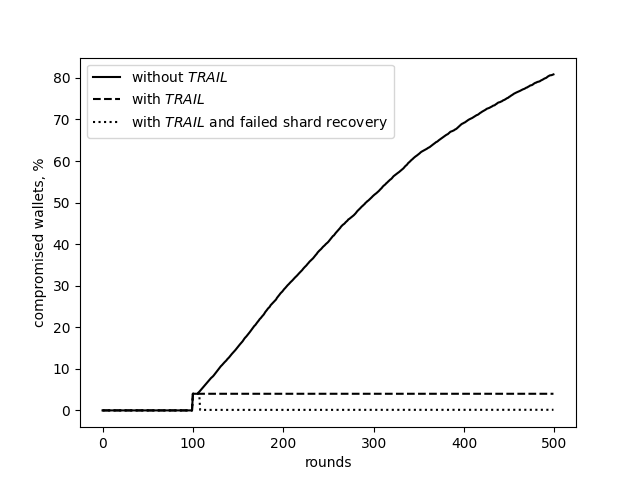}
\caption{Percentage of wallets compromised by malicious transactions.}

\label{corrupt-wallets}
\end{figure}

\begin{figure}
\includegraphics[width=\columnwidth]{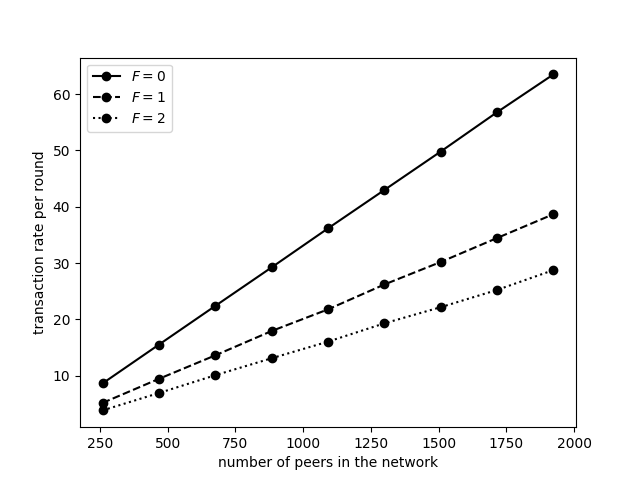}
\caption{Throughput with respect to the number of peers in the network for different fault tolerance levels.}

\label{throughput-scaling}
\end{figure}

\ \\
\noindent\textbf{Experiment description.} Figures~\ref{novalidation-txs}, \ref{normal-txs}, ~\ref{rollback-txs}, and \ref{corrupt-wallets} show the dynamics of transaction processing during a computation. In these simulations, a computation runs for $500$ rounds. The internal faulty peer tolerance threshold $f$ is $7$. The shard size is $s=3\cdot f+1 = 22$. The faulty shard tolerance threshold $F$ is $2$. This makes the trail size $3 \cdot F+1= 7$. 
In the experiments, the number of actual faculty shards is equal to the shard fault tolerance threshold $F$; that is, we run the experiments with maximum tolerance. The faulty shards behave correctly at the start of the simulation and fail at round $100$. The total number of shards in the system is $S=50$. Therefore, the network size is $S\cdot s = 1100$.
We run $15$ experiments per data point and show the average of the results.
Figures~\ref{novalidation-txs},~\ref{normal-txs}, and~\ref{rollback-txs} show the accumulated counts of started and confirmed transactions. We distinguish between the \emph{honest} transactions generated and the total number of transactions, which includes \emph{malicious} transactions generated by faulty shards. The number of honest confirmed transaction is lower.

In Figure~\ref{novalidation-txs}, no cross-shard validation is performed. In this figure, the number of confirmed transactions matches the total number of transactions; that is, transactions are confirmed whether they are malicious or not. 
The graph indicates a certain delay before transaction starting and confirmation due to the operation of external and internal \emph{PBFT}. 

In Figure~\ref{normal-txs}, \emph{TRAIL} validates the external transactions. Malicious transactions are not confirmed, and the total number of confirmed transactions only accounts for the honest transactions. 

In graph shown in Figure~\ref{rollback-txs}, \emph{TRAIL} uses the failed shard recovery procedure. Specifically, at round $100$, when $F$ shards fail, 
\emph{TRAIL} detects the faults and generates transactions to move the coins from the faulty shard wallets to the correct ones. This explains the increase in the transaction generation and confirmation rates near round $100$ in the figure. 

Figure~\ref{corrupt-wallets} shows the effect of malicious transactions on the overall system integrity. A wallet is \emph{compromised} if it is in a faulty shard or if it receives a coin from a compromised wallet that is not possessed by that compromised sender wallet. A wallet is \emph{safe} otherwise. The safety of a compromised wallet can be restored by the failed shard recovery procedure. The solid line in Figure~\ref{corrupt-wallets} shows the wallet compromise trend if no cross-shard validation is used. In this case, the failed shards continuously generate malicious transactions, compromising progressively larger number of wallets in the correct shards of the network. In the case in which cross-shard validation is used, the dashed line in Figure~\ref{corrupt-wallets}, the number of compromised wallets does not exceed the number of wallets in the failed shard. In the case with the failed shared recovery procedure, all wallets are eventually become safe again: the correct shards generate coin wallet recovery transactions. The delay in wallet recovery shown in the graph is due to the validation of these transactions.

Figure~\ref{throughput-scaling} shows the performance of \emph{TRAIL} at scale. We run these experiments with a maximum of $148$ shards made up of $13$ peers per shard. Each data point represents the average throughput from 5 simulations of 200 rounds each. We plot \emph{TRAIL}'s performance with three shard tolerance thresholds $F$: $0$, $1$ and $2$. There is no cross-shard validation in case of $F=0$. The figure indicates that the performance of \emph{TRAIL} scales well with network size increase. Larger fault tolerance thresholds incur more overhead. Therefore, the transaction rate is lower for higher values of $F$.

\begin{figure}[htbp]
\includegraphics[width=\columnwidth]{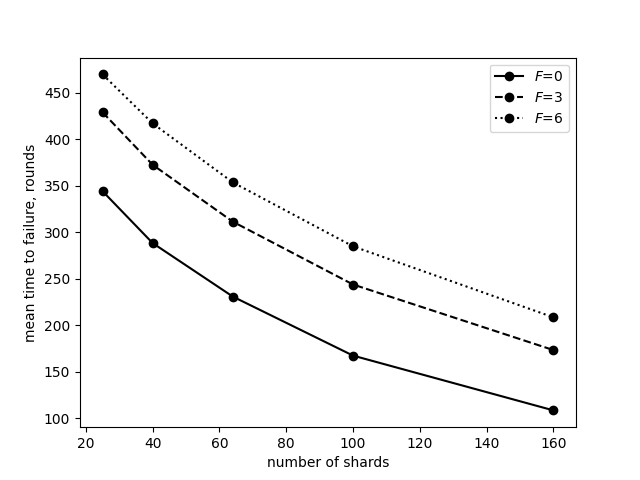}
\caption{The mean time to failure with respect to the number of shards.}
\label{MTTF}
\end{figure}

\begin{figure}[htbp]
\includegraphics[width=\columnwidth]{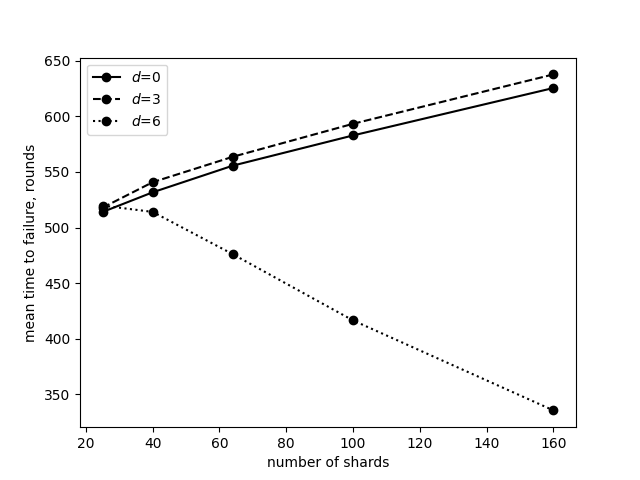}
\caption{The mean time to failure with individual failed shard detection and removal. System fault tolerance threshold $F =3$, detection delay $d$.}
\label{MTTF_with_rollbacks}
\end{figure}

\ \\
\textbf{Security analysis.}
We investigate the mean time to failure (MTTF) of \emph{TRAIL}. The system fails when the number of faulty shards exceeds its shard tolerance threshold $F$. In every round, a single peer fails uniformly at random. We run the experiment until the system fails. We carry out $100$ experiments per data point. The total number of peers is fixed at $1600$. We vary the number of shards and, hence, the number of peers in each shard.

In Figure~\ref{MTTF}, we show the MTTF for three different thresholds. If $F = 0$, there is no cross-shard validation and the system fails with its first failed shard. The graph shows that as the number of shards increases, i.e. as the number of peers in a shard decreases, the MTTF goes down. For a fixed number of shards, the MTTF is greater for a higher tolerance threshold since it requires more shards to fail before the system fails. 

In Figure~\ref{MTTF_with_rollbacks}, we show \emph{TRAIL} performance with failed shard detection and the failed shard recovery procedure. A detector discovers individual shard failure. Once discovered, the failed shard is removed from the system. The detection may not be immediate. In the figure, we plot three different delays $d$ for this detector. If $d =0$, the shard failure detection is instantaneous. For this experiment the shard failure threshold is fixed at $F=3$. 

The dynamics of the system are remarkable. The MTTF for both $d=0$ and $d=3$ trends upward as the number of shards increase, i.e. as the shard size decreases. The detection removes failing shards fast enough to counter accumulated peer failures and delay the overall system failure. 
Interestingly, the detector delay of $3$ performs better than $0$. In case of non-zero delay, the failed shard remains in the system until detection and it has the opportunity to accumulate more failed peers which are then removed with the shard. In case $d=6$, the detection delay is so large that the accumulated peer faults result in failed shards and the overall system failure. Therefore, the MTTF for $d=6$ trends downward. 

Our system security analysis indicates the increased resilience of the system with cross-shard validation of \emph{TRAIL}.

\section{Related Work and Its Application to \emph{TRAIL}}
\label{secRelated}

\noindent
\textbf{Sharding blockchains.} A number of sharding cryptocurrency blockchains are presented in the literature~\cite{rscoin,omniledger,rapidchain,ostraka,chainspace,elastico}. See Le {\em et al.}~\cite{shardingSurvey} for an extensive recent survey. 
We, however, have not seen an approach where sharding is done on the basis of the coin trail. 

We are not aware of any blockchain that is robust to shard failure. 
However, some blockchains attempt to mitigate it. 
\emph{RSCoin}~\cite{rscoin} requires central bank involvement. 
\emph{Omniledger}~\cite{omniledger}, \emph{Chainspace}~\cite{chainspace}, and \emph{Ostraka}~\cite{ostraka} use optimistic single-shard transaction commits with post factum detection of inconsistent transactions inserted by malicious shards. \emph{Rapidchain}~\cite{rapidchain} uses the Cuckoo rule~\cite{awerbuch2006towards,sen2012commensal}, which reshuffles the shard membership when new nodes join the shard in an attempt to foil the adversary from packing a shard with faulty nodes and exceeding its tolerance threshold.

We believe that most of the published blockchains, even if they do not use \emph{PBFT}, can employ \emph{TRAIL} to fortify themselves against shard failures. For this, consensus on transactions has to be deferred until the transaction's trail confirms it.

\ \\
\textbf{\emph{PBFT} optimizations and replacements.}
There are numerous proposals to optimize \emph{PBFT} performance.  See Wang {\em et al.} for a survey~\cite{wang2022bft}. Several propose using multiple leaders concurrently~\cite{mirbft,bigbft,rcc}. \emph{Mir-BFT}~\cite{mirbft} and \emph{RCC}~\cite{rcc} suggest accelerating \emph{PBFT} by processing non-conflicting requests concurrently. The algorithms have multiple leaders that process these requests simultaneously. \emph{BigBFT}~\cite{bigbft} further enhances parallelism by pipelining subsequent requests. 

Another approach is to shift some processing to the client. This eliminates the communication between peers and lowers the message complexity to $O(n)$. For example, \emph{QU}~\cite{qu} requires the client to directly communicate with all peers and validate their responses. \emph{QU} needs $5f+1$ peers for correctness. \emph{HQ}~\cite{hq} improves this approach and lowers the number of required peers to the theoretical minimum of $3f+1$. A variation of this scheme is \emph{Zyzzyva}~\cite{zyzzyva}, where the leader sends a message to other peers. The peers then reply to the clients directly rather than communicating amongst themselves. \emph{Zyzzyva} message complexity is in $O(n)$ during normal operation and in $O(n^2)$ during leader change. \emph{SBFT}~\cite{sbft} similarly optimizes message complexity by having designated collector peers, rather than the client, collect other 
peers' messages.
\emph{Hotstuff}~\cite{hotstuff} optimizes \emph{PBFT} by routing messages of all phases of \emph{PBFT} through the leader. This  decreases the message complexity to $O(n)$ while increasing the number of sequential consensus rounds. 

The above \emph{PBFT} optimizations can be applied in \emph{TRAIL} to the internal shard consensus protocol in a straightforward manner. Most of these optimizations can also be applied to the external \emph{TRAIL} algorithms as well. 
%

\section{Future Work}\label{secEnd}

The \emph{TRAIL} algorithm presented in this paper is the first to systematically address Byzantine shard failure protection in blockchains for cryptocurrencies. We foresee that it might be developed into a fully-fledged system. Alternatively, \emph{TRAIL} may be used as an add-on component to fortify existing blockchains against shard failure. 
As a third alternative, \emph{TRAIL} may be enhanced to handle more challenging conditions, such as network partitioning~\cite{hood2021partitionable} or dynamic networks~\cite{bricker2022blockchain}. Any and all of these alternative directions will increase the robustness of future blockchains.

\bibliographystyle{IEEEtran}
\bibliography{eyewitness}

\end{document}